\def\adl@drawiv#1#2#3{%
	\hskip0
	\tabcolsep
	\xleaders#3{#2 0\@tempdimb #1{1}#2 0.5\@tempdimb}%
	#2\z@ plus1fil minus1fil\relax
	\hskip0\tabcolsep}
\newcommand{\cdashlinelr}[1]{%
	\noalign{\vskip\aboverulesep
		\global\let\@dashdrawstore\adl@draw
		\global\let\adl@draw\adl@drawiv}
	\cdashline{#1}
	\noalign{\global\let\adl@draw\@dashdrawstore
		\vskip\belowrulesep}}
\let\mathbb=\mathds
\def\d{{\text {\rm d}}}
\DeclareMathOperator{\Tr}{Tr}
\DeclareMathOperator{\spec}{spec}
\def\I{\mathds{1}}
\newcommand{\proj}[1]{\left\{#1\right\}} 
\newcommand{\be}{{\mathbf e}}
\def\0{{\mathbf{0}}}
\def\1{{\mathbf{1}}}
\def\2{{\mathbf{2}}}
\def\3{{\mathbf{3}}}
\def\4{{\mathbf{4}}}
\def\5{{\mathbf{5}}}
\def\6{{\mathbf{6}}}
\def\7{{\mathbf{7}}}
\def\8{{\mathbf{8}}}
\def\9{{\mathbf{9}}}
\def\be{\begin{equation}}
\def\ee{\end{equation}}
\def\bea{\begin{eqnarray}}
\def\eea{\end{eqnarray}}
\theoremstyle{plain}
\theoremstyle{definition}
\theoremstyle{remark}
\newcommand{\opnorm}{\@ifstar\@opnorms\@opnorm}
\newcommand{\@opnorms}[1]{%
	$\left|\mkern-1.5mu\left|\mkern-1.5mu\left|
	#1
	\right|\mkern-1.5mu\right|\mkern-1.5mu\right|$
}
\newcommand{\@opnorm}[2][]{%
	\mathopen{#1|\mkern-1.5mu#1|\mkern-1.5mu#1|}
	#2
	\mathclose{#1|\mkern-1.5mu#1|\mkern-1.5mu#1|}
}
\tikzset{>={Latex[length=4,width=4]}} 
\colorlet{mylightblue}{blue!5!white}
\colorlet{mydarkblue}{blue!30!black}
\colorlet{myblue}{blue!50!black}
\colorlet{myred}{red!50!black}
\colorlet{mydarkred}{red!30!black}
\colorlet{mydarkgreen}{green!30!black}
\newcommand{\sh}{\kern-0.08em$^\textbf{\#}$\hspace{-3pt}}
\renewcommand{\b}{\kern-0.06em$\flat$}
\begin{document}

\let\origmaketitle\maketitle
\def\maketitle{
	\begingroup
	\def\uppercasenonmath##1{} 
	\let\MakeUppercase\relax 
	\origmaketitle
	\endgroup
}

\title{\bfseries \Large{ 
The operator layer cake theorem is equivalent to\\ Frenkel's integral formula
		}}

\author{ \normalsize 
{Hao-Chung Cheng}$^{1\text{--}5}$,
{Gilad Gour}$^{6}$,
{Ludovico Lami}$^{7}$,
and
{Po-Chieh Liu}$^{1,2}$
}
\address{\small  	
$^1$Department of Electrical Engineering and Graduate Institute of Communication Engineering,\\ National Taiwan University, Taipei 106, Taiwan\\
$^2$Department of Mathematics and National Center for Theoretic Science, National Taiwan University, Taipei 106, Taiwan\\
$^3$Center for Quantum Science and Engineering,  National Taiwan University, Taipei 106, Taiwan\\
$^4$Hon Hai (Foxconn) Quantum Computing Center, New Taipei City 236, Taiwan\\
$^5$Physics Division, National Center for Theoretical Sciences, Taipei 10617, Taiwan\\
$^6$Faculty of Mathematics, Technion-Israel Institute of Technology, Haifa 3200003, Israel\\
$^7$Scuola Normale Superiore, Piazza dei Cavalieri 7, 56126 Pisa, Italy
}
%


\begin{abstract}
The operator layer cake theorem provides an integral representation for the directional derivative of the operator logarithm in terms of a family of projections [\href{https://arxiv.org/abs/2507.06232}{arXiv:2507.06232}].
Recently, the related work [\href{https://arxiv.org/abs/2507.07065}{arXiv:2507.07065}] showed that the theorem gives an alternative proof to Frenkel's integral formula for Umegaki's relative entropy
[\href{http://doi.org/10.22331/q-2023-09-07-1102}{\textit{Quantum}, 7:1102 (2023)}].
In this short note, we find a converse implication, demonstrating that the operator layer cake theorem is equivalent to Frenkel's integral formula.
\end{abstract}

\maketitle
\vspace{-2.5em}



\section{Introduction} \label{sec:intro}

We consider a finite-dimensional Hilbert space.
For a positive definite operator $B>0$ and a Hermitian operator $H$, we denote the directional derivative of the natural logarithm at $B$ with direction $H$ by
\begin{align}
    \mathrm{D}\log[B](H)
    \coloneq \lim_{t\to 0} \frac{\log(B+tH) - \log B}{t}.
\end{align}
In Ref.~\cite[Theorem B.1]{preparation}, an \emph{operator layer cake theorem} for $\mathrm{D}\log[B](H)$ has been proved, i.e.,
\begin{align} \label{eq:double-side0}
    \mathrm{D}\log[B](H) 
    = \int_0^\infty \proj{ H > \gamma B } \d \gamma
    - \int_{-\infty}^0 \proj{ H \leq \gamma B } \d \gamma,
\end{align}
where $\proj{ H > \gamma B }$ (resp.~$\proj{ H \leq \gamma B } $) is the projection onto the strictly positive part (resp.~non-positive part)
of $H-\gamma B$.
This integral representation finds uses in showing error exponents for quantum packing-type problems such as quantum channel coding~\cite{preparation} as well as for numerous quantum covering-type problems~\cite{sharp25}.

On the other hand, for any $A\geq 0$ and $B>0$, Umegaki introduced the quantum relative entropy~\cite{Ume62}
\begin{align}
    D(A\Vert B)
    \coloneq \Tr\left[ A \left( \log A - \log B \right) + B - A \right],
\end{align}
for which Frenkel established the following integral trace representation~\cite{Fre23}:
\begin{align} \label{eq:Frenkel00}
    D(A\Vert B)
    = \int_{-\infty}^\infty \frac{\d t}{|t|(t-1)^2} \Tr\left[ \left( (1-t)A + t B \right)_{-}\right],
\end{align}
where $(H)_{\pm} \coloneq \frac12\big( \sqrt{H^2} \pm H\big)$ denotes the positive or negative part of a Hermitian operator $H$.
Later, the formula was rewritten in the following form
\cite{hirche2023quantum, Jen24}:
\begin{align} \label{eq:Frenkel0}
    D(A\Vert B)
    = \int_1^\infty \bigg\{ \frac{1}{\gamma} E_{\gamma}(A\Vert B) + \frac{1}{\gamma^2} E_{\gamma}(B\Vert A) \bigg\}\, \d \gamma,
\end{align}
where the quantum hockey-stick divergence for $A, B \geq 0$ with a parameter $\gamma \geq 0$ is defined by
\begin{align}
    E_{\gamma}(A\Vert B) 
    \coloneq \Tr\left[ \left( A-\gamma B\right)_+\right].
\end{align}

In Ref.~\cite[Proposition 4.2]{LHC25_layer_cake}, it was shown that the operator layer cake theorem~\eqref{eq:double-side0}
implies~\eqref{eq:Frenkel0}, providing an alternative proof to Frenkel's integral formula.
In this note, we will show that 
Frenkel's formula~\eqref{eq:Frenkel0} implies a special case of the operator layer cake theorem with any positive direction, i.e.,
\begin{align} \label{eq:single-side0}
    \mathrm{D}\log[B](A) 
    = \int_0^\infty \proj{ A > \gamma B } \d \gamma, \quad \forall\, A\geq 0.
\end{align}
Moreover, we will show that~\eqref{eq:single-side0} implies the general version in~\eqref{eq:double-side0}.
Hence, the operator layer cake theorem~\eqref{eq:double-side0} is equivalent to Frenkel's formula~\eqref{eq:Frenkel0}.

\section{Result and Proof} \label{sec:result}
\begin{prop}
The following statements are equivalent:
\begin{enumerate}[(i)]
    \item Operator layer cake theorem~\cite[Theorem B.1]{preparation}:
    \begin{align} \label{eq:double-side}
    \mathrm{D}\log[B](H) 
    = \int_0^\infty \proj{ H > \gamma B } \d \gamma
    - \int_{-\infty}^0 \proj{ H \leq \gamma B } \d \gamma, \quad \forall\ H = H^\dagger,\ B>0.
    \end{align}

    \item Operator layer cake theorem with positive direction:
    \begin{align} \label{eq:single-side}
    \mathrm{D}\log[B](A) 
    = \int_0^\infty \proj{ A > \gamma B } \d \gamma, \quad \forall\ A\geq 0,\ B> 0.
    \end{align}

    \item Frenkel's integral formula~\cite{frenkel2022integral}:
    \begin{align} \label{eq:Frenkel}
    D(A\Vert B)
    = \int_1^\infty \bigg\{ \frac{1}{\gamma} E_{\gamma}(A\Vert B) + \frac{1}{\gamma^2} E_{\gamma}(B\Vert A) \bigg\}\, \d \gamma
    , \quad \forall\ A\geq 0,\ B>0.
    \end{align}
\end{enumerate}
\end{prop}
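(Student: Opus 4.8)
The plan is to prove the two nontrivial implications $\text{(ii)}\Rightarrow\text{(i)}$ and $\text{(iii)}\Rightarrow\text{(ii)}$; combined with the already known $\text{(i)}\Rightarrow\text{(iii)}$ of~\cite[Proposition~4.2]{LHC25_layer_cake}, this closes the cycle $\text{(i)}\Rightarrow\text{(iii)}\Rightarrow\text{(ii)}\Rightarrow\text{(i)}$. (One may also note that $\text{(i)}\Rightarrow\text{(ii)}$ is immediate: taking $H=A\geq0$, one has $A-\gamma B=A+|\gamma|B>0$ for every $\gamma<0$, so $\proj{A\leq\gamma B}=0$ and the second integral in~\eqref{eq:double-side} vanishes.) Throughout I rely on the integral representation $\mathrm{D}\log[B](X)=\int_0^\infty(B+s)^{-1}X(B+s)^{-1}\,\d s$, which gives $\mathrm{D}\log[B](B)=I$ and, by cyclicity of the trace, the symmetry $\Tr[A\,\mathrm{D}\log[B](H)]=\Tr[H\,\mathrm{D}\log[B](A)]$ for all Hermitian $A,H$.

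For $\text{(ii)}\Rightarrow\text{(i)}$: given a Hermitian $H$ and $B>0$, choose $c>0$ large enough that $A\coloneq H+cB>0$ (any $c>\norm{B^{-1/2}HB^{-1/2}}{\infty}$ works). Applying~\eqref{eq:single-side} to this $A$, using linearity of the directional derivative together with $\mathrm{D}\log[B](B)=I$ on the left, and the identity $\proj{A>\gamma B}=\proj{H>(\gamma-c)B}$ followed by a shift of the integration variable on the right, yields
\[ \mathrm{D}\log[B](H)=\int_{-c}^{\infty}\proj{H>\gamma B}\,\d\gamma-cI. \]
It then remains to split $\int_{-c}^{\infty}=\int_{0}^{\infty}+\int_{-c}^{0}$, to use $\proj{H>\gamma B}+\proj{H\leq\gamma B}=I$ in order to write $\int_{-c}^{0}\bigl(\proj{H>\gamma B}-I\bigr)\d\gamma=-\int_{-c}^{0}\proj{H\leq\gamma B}\,\d\gamma$, and to observe that $\proj{H\leq\gamma B}=0$ for every $\gamma\leq-c$ (since then $H-\gamma B\geq H+cB>0$), so the lower limit may be pushed to $-\infty$. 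This is exactly~\eqref{eq:double-side}.

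For $\text{(iii)}\Rightarrow\text{(ii)}$: fix $A\geq0$, $B>0$, and a Hermitian $H$, and differentiate Frenkel's identity~\eqref{eq:Frenkel} at $B$ along $H$ — that is, replace $B$ by $B+tH$ and apply $\frac{\d}{\d t}\big|_{t=0}$ (permissible for $|t|$ small since $B>0$). The left side produces $-\Tr[A\,\mathrm{D}\log[B](H)]+\Tr H$. For the right side, the first-order perturbation formula $\frac{\d}{\d t}\big|_{t=0}\Tr[(M+tN)_{+}]=\Tr[N\,\proj{M>0}]$, valid whenever $0\notin\spec(M)$ and hence for all but finitely many $\gamma$, gives $\frac{\d}{\d t}\big|_{t=0}E_\gamma(A\Vert B+tH)=-\gamma\Tr[H\proj{A>\gamma B}]$ and $\frac{\d}{\d t}\big|_{t=0}E_\gamma(B+tH\Vert A)=\Tr[H\proj{B>\gamma A}]$. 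The change of variables $\gamma\mapsto1/\gamma$ turns $\int_{1}^{\infty}\gamma^{-2}\proj{B>\gamma A}\,\d\gamma$ into $\int_{0}^{1}\proj{A\leq\gamma B}\,\d\gamma$ (using $\proj{B>\gamma^{-1}A}=\proj{A\leq\gamma B}$ for a.e.\ $\gamma$), and since $\proj{A>\gamma B}+\proj{A\leq\gamma B}=I$ the differentiated right side collapses to $\Tr H-\int_{0}^{\infty}\Tr[H\proj{A>\gamma B}]\,\d\gamma$. Equating the two expressions, cancelling $\Tr H$, and applying the trace symmetry above gives $\Tr\bigl[H\bigl(\mathrm{D}\log[B](A)-\int_{0}^{\infty}\proj{A>\gamma B}\,\d\gamma\bigr)\bigr]=0$ for every Hermitian $H$; since both operators inside the bracket are Hermitian, this is precisely~\eqref{eq:single-side}.

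The step I expect to demand the most care is the interchange of $\frac{\d}{\d t}$ with the improper $\gamma$-integral in $\text{(iii)}\Rightarrow\text{(ii)}$, since $M\mapsto\Tr[M_{+}]$ is only piecewise smooth. One must verify that $A-\gamma B$ and $B-\gamma A$ are singular for only finitely many $\gamma$ (so the pointwise $t$-derivative exists for a.e.\ $\gamma$) and exhibit an integrable majorant uniform in $t$: for $|t|$ small the integrand $\gamma^{-1}E_\gamma(A\Vert B+tH)$ is supported in a fixed bounded $\gamma$-interval, $\gamma^{-2}E_\gamma(B+tH\Vert A)=O(\gamma^{-2})$ as $\gamma\to\infty$, and the relevant difference quotients are bounded by $\norm{H}{1}$ via the $1$-Lipschitz property of $M\mapsto\Tr[M_{+}]$ in trace norm, so dominated convergence applies. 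The remaining manipulations — one-variable calculus, the change of variables, and trace cyclicity — are routine.
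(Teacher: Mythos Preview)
Your proposal is correct and follows essentially the same route as the paper: the shift trick $A=H+cB$ for $\text{(ii)}\Rightarrow\text{(i)}$ and differentiation of Frenkel's formula in the second argument combined with the self-adjointness of $\mathrm{D}\log[B](\cdot)$ for $\text{(iii)}\Rightarrow\text{(ii)}$ are exactly the paper's arguments, and your discussion of the dominated convergence justification (finitely many singular $\gamma$, compact support of the first integrand, $1$-Lipschitz bound on $M\mapsto\Tr[M_+]$) matches the paper's treatment as well. The only cosmetic difference is that the paper performs the change of variables $\gamma\mapsto1/\gamma$ before invoking dominated convergence on the second term, whereas you invoke it directly on $[1,\infty)$; note that for that term the correct dominating function is $\|H\|_1/\gamma^2$ rather than the constant $\|H\|_1$.
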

\begin{proof}
The implication ``(i) $\Rightarrow$ (ii)'' clearly holds, since $\proj{H\leq \gamma B } = 0$ for any $H\geq 0$ and $\gamma < 0$.
The implication ``(i) $\Rightarrow$ (iii)'' was proved in~\cite[Proposition~4.2]{LHC25_layer_cake} via the fundamental theorem of calculus.
Below, we will show ``(iii) $\Rightarrow$ (ii)'' and ``(ii) $\Rightarrow$ (i)'', completing the equivalence of the three statements.
In the end, we will also provide a proof of the implication ``(iii) $\Rightarrow$ (i)'', although that would not be strictly needed.

Proof of ``(iii) $\Rightarrow$ (ii)'':
For any Hermitian $X$,
\begin{align} \label{eq:derivative_1}
\left.\frac{\d}{\d t} D(A\Vert B + t X )\right|_{t=0}
&= -\Tr\left[ A \cdot \mathrm{D}\log [B](X)\right] + \Tr X.
\end{align}
On the other hand,
\begin{equation}
\begin{aligned}
    &\frac{\d}{\d t} D(A \,\Vert B+tX )\bigg\vert_{t=0} \\
    &\qquad = \frac{\d}{\d t} \int_1^\infty \bigg\{ \frac{1}{ \gamma}E_\gamma(A\Vert B+tX)+
    \frac{1}{\gamma^2}E_\gamma(B+tX\Vert A) \bigg\}\, \d\gamma\,\bigg\vert_{t=0}
    \\
    &\qquad =\frac{\d}{\d t} \int_1^\infty \bigg\{ \frac{1}{ \gamma}\Tr\left[(A-\gamma(B+tX))_+\right] +
    \frac{1}{\gamma^2}\Tr\left[(B+tX-\gamma A)_+\right]\bigg\}\, \d\gamma\,\bigg\vert_{t=0}
    \\
    &\qquad =\frac{\d}{\d t}\left( \int_1^\infty   \frac{1}{ \gamma}\Tr\left[(A-\gamma B+t(-\gamma X))_+\right] \d\gamma +
    \int_0^1 \Tr\left[(B-\gamma^{-1}A+tX)_+\right]\d\gamma\right)\bigg\vert_{t=0}
    \\
    &\qquad \overset{(\dag)}{=} \int_1^\infty   \frac{1}{ \gamma}\cdot\frac{\d}{\d t}\Tr\left[(A-\gamma B+t(-\gamma X))_+\right] \bigg\vert_{t=0}\!\!\d\gamma +
    \int_0^1 \frac{\d}{\d t}\Tr\left[(B-\gamma^{-1}A+tX)_+\right]\bigg\vert_{t=0}\!\!\d\gamma
    \\
    &\qquad = \int_1^\infty   \frac{1}{\gamma}\Tr[-\gamma X\cdot\{A-\gamma B>0\}]\,\d\gamma +
    \int_0^1 \Tr[X\cdot \{B-\gamma^{-1}A \geq 0\}]\,\d\gamma
    \\
    &\qquad = -\int_1^\infty \Tr[X\{A>\gamma B\}]\,\d\gamma +
    \int_0^1 \Tr[X \{\gamma B\geq A\}]\,\d\gamma
    \\
    &\qquad = -\int_0^\infty \Tr[X\{A>\gamma B\}]\,\d\gamma +
    \Tr [X].
    \label{eq:step_projection}
\end{aligned}
\end{equation}
Here, in~($\dag$), we took the derivative inside the integral, applying the dominated convergence theorem. To see why this is possible, we first notice that if we choose $t_0>0$ such that $B+tX>B/2$ for all $|t|<t_0$, then we have $A<\gamma(B+tX)$ for $\gamma>2r$, where $r\coloneqq \|B^{-1/2}AB^{-1/2}\|$. Therefore, the first integral on the left-hand side of ($\dag$) can be rewritten as
\[
\int_1^{2r}    \frac{1}{ \gamma}\Tr\left[(A-\gamma B+t(-\gamma X))_+\right] \d\gamma.
\]
First, the integrand is differentiable at $t=0$ almost everywhere in $\gamma$, except for $\gamma\in \spec(AB^{-1})$, by Lemma~\ref{lemm:differentiability}.
Recall that the function $Y \mapsto \Tr[Y_+]$ is Lipschitz continuous with respect to the trace norm $\|\cdot\|_1$ (see Lemma~\ref{lemm:Lipschitz} below and also \cite[Lemma 2]{Lami_2025}). 
Consequently, for $0<|t|<t_0$, the magnitude of the difference quotient for the first integrand is bounded by
\begin{align}
\frac{1}{|t|\gamma} \left| \Tr[(A-\gamma(B+tX))_+] - \Tr[(A-\gamma B)_+] \right| 
\le \frac{1}{|t|\gamma} \| -\gamma t X \|_1 = \|X\|_1.
\end{align}
Since the integration domain $[1, 2r]$ is compact and the bounding function $\|X\|_1$ is integrable, the Lebesgue Dominated Convergence Theorem justifies interchanging the derivative at $t=0$ and the integral.

Similarly, for the second integral over $[0,1]$, the integrand is differentiable at $t=0$ almost everywhere in $\gamma$ by Lemma~\ref{lemm:differentiability} and the difference quotient is again bounded by $\|X\|_1$, which permits the application of the Lebesgue Dominated Convergence Theorem to the second term as well.


Note that the map $\mathrm{D} \log [B] (\cdot)$ is self adjoint with respect to the Hilbert--Schmidt inner product~\cite{Lie73}.
Therefore, from~\eqref{eq:derivative_1}, we have
\begin{align}
    \Tr\left[ A \cdot \mathrm{D}\log [B](X)\right]
    =
    \Tr [X \cdot \mathrm{D}\log [B](A)]
    =\int_0^\infty \Tr[X\{A>\gamma B\}]\,\d\gamma.
\end{align}

Since the equality holds for any Hermitian $X$, we can conclude that
\begin{align}
    \mathrm{D}\log[B](A) 
    = \int_0^\infty \proj{ A > \gamma B } \d \gamma,
\end{align}
showing the implication ``(iii) $\Rightarrow$ (ii)''.

Proof of ``(ii) $\Rightarrow$ (i)'':
For any $B> 0$ and Hermitian $H$, let $r>\left\|B^{-1/2} H B^{-1/2}\right\|_{\infty}$, where $\|\cdot\|_\infty$ denotes the operator norm. Then, $H+rB>0$. 
We calculate
\begin{equation}
\begin{aligned}
    \mathrm{D}\log [B](H)
    &= \mathrm{D}\log [B](H+rB)-\mathrm{D}\log [B](rB)
    \\
    &= \mathrm{D}\log [B](H+rB)-r\I
    \\
    &\overset{{\eqref{eq:single-side}}}{=}\int_0^\infty \{H+rB>\gamma B\} \,\d \gamma-r\I
    \\
    &=\int_r^\infty \{H+rB>\gamma B\} \,\d \gamma
    + \int_0^r \{H+rB>\gamma B\} \,\d \gamma
    -r\I
    \\
    &=\int_r^\infty \{H+rB>\gamma B\} \,\d \gamma-\int_0^r\{H+rB \leq \gamma B\} \,\d \gamma
    \\
    &=\int_r^\infty \{H+rB>\gamma B\} \,\d \gamma-\int_{-\infty}^r\{H+rB \leq \gamma B\} \,\d \gamma
    \\
    &= \int_0^\infty \{H+rB> (\gamma+r) B\} \,\d \gamma-\int_{-\infty}^0 \{H+rB\leq (\gamma+r)B\} \,\d \gamma
    \\
    &=\int_0^\infty \{H> \gamma B\} \,\d \gamma-\int_{-\infty}^0 \{H\leq \gamma B\} \,\d \gamma.
\end{aligned}
\end{equation}

Proof of ``(iii) $\Rightarrow$ (i)'': 
Let $X$ be a Hermitian matrix, and let $s_0,t_0>0$ be small enough so that $B\pm s_0X,B\pm t_0H>0$.
Then, we have for $s\in[-s_0,s_0]$ and $t\in[-t_0,t_0]$,
\begin{equation}
\begin{aligned}
    \Tr\!\left[ X \cdot \mathrm{D}\log[B](H) \right]
    &= \Tr\!\left[ X \cdot \frac{\d}{\d t}\log(B+tH)\bigg\vert_{t=0} \right]
    \\
    &= \frac{\partial^2}{\partial s\,\partial t}
       \Tr\!\left[(B+sX)\log(B+tH)\right]\bigg\vert_{s=t=0}
    \\
    &= -\frac{\partial^2}{\partial s\,\partial t}
       D(B+sX\Vert B+tH)\bigg\vert_{s=t=0}.
    \label{eq:LC_step1}
\end{aligned}
\end{equation}
Here, as the function $\Tr\left[(B+sX)\log(B+tH)\right]$ is smooth jointly in $(s,t)$, the partial derivatives with respect to $s$ and $t$ are interchangeable, due to Schwarz's theorem.
Next, the directional derivative given in Lemma~\ref{lemm:differentiability} below shows that
\begin{align}
    \frac{\partial}{\partial s}
    E_\gamma(B+sX\Vert B+tH)
    &= \Tr\!\left[
        X\,\{B+sX >\gamma (B+t H)\}
      \right], 
    \label{eq:dE1}\\
    \frac{\partial}{\partial s}
    E_\gamma(B+tH\Vert B+sX)
    &= -\gamma\,\Tr\!\left[
        X\,\{\gamma (B+sX) <B+ tH\}
      \right],
    \label{eq:dE2}
\end{align}
almost everywhere in $\gamma$ for each fixed $s,t$.  Using Frenkel's integral representation~\eqref{eq:Frenkel}, we obtain
\begin{equation}
\begin{aligned}
    \frac{\partial}{\partial s}D(B+sX\Vert B+tH)\bigg\vert_{s=0}
    &= \int_1^\infty \frac{1}{\gamma}
       \frac{\partial}{\partial s}
       E_\gamma(B+sX\Vert B+tH)\bigg\vert_{s=0}\, \d\gamma
    \\[-0.1em]
    &\quad
    + \int_1^\infty \frac{1}{\gamma^2}
       \frac{\partial}{\partial s}
       E_\gamma(B+tH\Vert B+sX)\bigg\vert_{s=0}\, \d\gamma.
    \label{eq:LC_step2}
\end{aligned}
\end{equation}
In the above equation, we interchanged the integral and the partial derivative by an application
of Lebesgue's dominated convergence theorem (see, e.g., \cite[Thm.~2.24]{folland2013real}). 
We will justify this explicitly for the first integral, as the reasoning for the second is entirely analogous. Consider
the difference quotients
\be
f_s(\gamma)
\coloneqq \frac{E_\gamma(B+sX\Vert B+tH)-E_\gamma(B\Vert B+tH)}{s}.
\ee
For each fixed $\gamma$ outside a finite set (of Lebesgue measure zero), the
limit $\lim_{s\to 0} f_s(\gamma)$ exists and coincides with the expression in
\eqref{eq:dE1}. Moreover,
\be
|f_s(\gamma)|
\leq \|X\|_1
\ee
for all $s$ and $\gamma$. Finally, because $B>0$,
Frenkel's integral representation \eqref{eq:Frenkel} implies that $E_\gamma(B+sX\Vert B+tH)$
vanishes for $\gamma$ outside a compact interval $[1,\Gamma]$ independent of
small $s$ and $t$. Hence the family $\{\gamma\mapsto f_s(\gamma)/\gamma\}_s$ is
dominated by the integrable function $\gamma\mapsto \|X\|_1/\gamma$ on $[1,\Gamma]$,
and dominated convergence yields the desired interchange of limit and integral.

Substituting~\eqref{eq:dE1} and~\eqref{eq:dE2} at $s=0$ into~\eqref{eq:LC_step2}, and then differentiating
with respect to $t$ at $t=0$, we obtain
\begin{align}
    \Tr\!\left[ X \cdot \mathrm{D}\log[B](H) \right]
    &= -\frac{\partial}{\partial t}
       \int_1^\infty 
       \Big(
           \Tr\!\left[X\{(1-\gamma)B > \gamma t H\}\right]
           - \Tr\!\left[X\{(\gamma-1)B < tH\}\right]
       \Big)\frac{\d\gamma}{\gamma}\bigg\vert_{t=0}
\end{align}
For the first term and 
$t\in \big(0, t_0/2\big]$,
\begin{equation}
\begin{aligned}
\int_{1}^{\infty}
\Tr\!\left[X \{(1-\gamma)B>\gamma tH\}\right]\frac{\d\gamma}{\gamma}
&=
\int_{1}^{\infty}
\Tr\!\left[X \Bigl\{ \frac{1-\gamma}{\gamma t} B > H \Bigr\}\right]\frac{\d\gamma}{\gamma}\\
&=t\int_{-1/t}^{0}
\Tr[X\{uB>H\}]\frac{\d u}{1+tu},
\end{aligned}
\end{equation}
with 
$u=\frac{1-\gamma}{\gamma t}$.
The projection $\{uB>H\}$ is zero for all $u<-1/t_0$ by our choice of $t_0$. Therefore, the lower limit $-1/t$ can be 
replaced with $-1/t_0$. Since the above expression vanishes at $t=0$, we have
\begin{equation}\begin{aligned}
-\frac{\partial}{\partial t} \int_{1}^{\infty}
\Tr\!\left[X \{(1-\gamma)B>\gamma tH\}\right]\frac{\d\gamma}{\gamma}\bigg\vert_{t=0} &= \lim_{t\to 0} \int_{-1/t_0}^0 \Tr[X\{uB>H\}]\frac{\d u}{1+tu} \\
&= \int_{-1/t_0}^0 \Tr[X\{uB>H\}] \left( \lim_{t\to 0} \frac{1}{1+tu} \right) \d u \\
&= \int_{-1/t_0}^0 \Tr[X\{uB>H\}]\, \d u \\
&= \int_{-\infty}^0 \Tr[X\{uB>H\}]\, \d u,
\end{aligned} \end{equation}
where, on the second line, we took the limit inside the integral using once again Lebesgue's Dominated Convergence theorem, this time with dominating function $\left| \frac{1}{1+tu}\right| \leq \frac{1}{1 - t/t_0} \leq 2$.
%
Similarly,
\begin{equation}
\begin{aligned}
\int_{1}^{\infty}
\Tr\left[X\{(\gamma-1)B<tH\}\right]\frac{\d\gamma}{\gamma}
&=
\int_{1}^{\infty}
\Tr\!\left[
X \Bigl\{ \frac{\gamma-1}{t}B<H \Bigr\}
\right]\frac{\d\gamma}{\gamma}\\
&=t\int_{0}^{1/t_0}
\Tr\left[X\{uB<H\}\right]\frac{\d u}{1+tu},
\end{aligned}
\end{equation}
so that
\begin{align}
\frac{\partial}{\partial t}
\int_{1}^{\infty}
\Tr[X\{(\gamma-1)B<tH\}]\frac{\d\gamma}{\gamma}
\bigg\vert_{t=0}
=
\int_{0}^{\infty} \,
\Tr[X\{uB<H\}]\,\d u.
\end{align}
In the end, we obtain
\begin{align}
    \Tr[X\cdot\mathrm D\log[B](H)]
=
\int_{0}^{\infty}\Tr[X\{H>\gamma B\}]\,\d\gamma
-
\int_{-\infty}^{0}\Tr[X\{H < \gamma B\}]\, \d\gamma.\label{eq:LC_step3}
\end{align}
Since~\eqref{eq:LC_step3} holds for every Hermitian $X$, we conclude that
\begin{align}
    \mathrm{D}\log[B](H)
    &= \int_0^\infty \proj{H>\gamma B}\,\d\gamma
      - \int_{-\infty}^0 \proj{H< \gamma B}\,\d\gamma\\
      &= \int_0^\infty \proj{H>\gamma B}\,\d\gamma
      - \int_{-\infty}^0 \proj{H\le \gamma B}\,\d\gamma.
\end{align}
The proof is complete.
\end{proof}

\begin{lemm}[\hspace{1sp}{\cite[Lemma 2.2]{LHC25_layer_cake}}] \label{lemm:differentiability}
Let $K$ and $L$ be Hermitian matrices.
Then,
\begin{align}
    \frac{\d}{\d t} \Tr\left[ (K-t L)_+ \right]
    = - \Tr\left[ L \left\{ K > t L \right\}\right]
    = - \Tr\left[ L \left\{ K \geq t L \right\}\right],
\end{align}
except for $t$ such that $K-tL$ is singular.
\end{lemm}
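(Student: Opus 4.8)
The plan is to derive the formula from the chain rule for traces of smooth functions of a Hermitian matrix, using the nonsingularity hypothesis to circumvent the kink of $x\mapsto x_+$ at the origin. Fix $t_0$ with $K-t_0L$ nonsingular. By continuity of the spectrum, there are $\delta>0$ and an open interval $I\ni t_0$ with $\spec(K-tL)\cap[-\delta,\delta]=\emptyset$ for all $t\in I$. Pick $g\in C^\infty(\bbR)$ agreeing with $x\mapsto x_+$ on $\{|x|\ge\delta\}$ (smoothly interpolated on $(-\delta,\delta)$); then $\Tr[(K-tL)_+]=\Tr[g(K-tL)]$ for every $t\in I$, since functional calculus only sees eigenvalues outside $[-\delta,\delta]$.

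Next I would invoke the standard first-variation formula: for a $C^1$ curve $t\mapsto H(t)$ of Hermitian matrices and $g\in C^1$, one has $\tfrac{\d}{\d t}\Tr[g(H(t))]=\Tr[g'(H(t))\,\dot H(t)]$. This is most transparent from the Daleckii--Krein formula (or Cauchy's integral representation of $g$), where the off-diagonal divided-difference terms drop out under the trace and only $g'$ survives on the diagonal. With $H(t)=K-tL$ and $\dot H(t)=-L$ this gives $\tfrac{\d}{\d t}\Tr[g(K-tL)]=-\Tr[L\,g'(K-tL)]$ on $I$.

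Finally I would identify $g'(K-tL)$. Since $\spec(K-tL)$ avoids $[-\delta,\delta]$, and $g'(x)=1$ for $x\ge\delta$ while $g'(x)=0$ for $x\le-\delta$, functional calculus yields $g'(K-tL)=\proj{K-tL>0}=\proj{K>tL}$; moreover $0\notin\spec(K-tL)$ forces $\proj{K-tL>0}=\proj{K-tL\ge0}=\proj{K\ge tL}$. Combining the three displays proves the claim at $t_0$, which was an arbitrary point of nonsingularity.

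The only subtlety is precisely this non-smoothness of $x_+$ at the origin; the nonsingularity hypothesis is exactly what makes it harmless, by keeping the spectrum uniformly away from $0$ near $t_0$ and thereby licensing the passage to a genuinely smooth $g$. A variant that avoids the smoothing step uses Rellich's theorem to choose eigenvalues $\lambda_i(t)$ of $K-tL$ real-analytic near $t_0$; then $\Tr[(K-tL)_+]=\sum_{\lambda_i(t_0)>0}\lambda_i(t)$ is differentiable with derivative $\sum_{\lambda_i(t_0)>0}\dot\lambda_i(t_0)$, and the Hellmann--Feynman identity $\dot\lambda_i(t_0)=-\langle v_i,Lv_i\rangle$ (summed over each degenerate block to keep the expression basis-independent) recovers $-\Tr[L\,\proj{K>t_0L}]$.
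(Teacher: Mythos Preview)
Your argument is correct. The smoothing trick---replacing $x\mapsto x_+$ by a $C^\infty$ function $g$ that agrees with it outside $[-\delta,\delta]$, which is licensed precisely by the nonsingularity hypothesis keeping the spectrum of $K-tL$ away from the origin---cleanly reduces the problem to the standard first-variation formula $\tfrac{\d}{\d t}\Tr[g(H(t))]=\Tr[g'(H(t))\,\dot H(t)]$, and the identification $g'(K-tL)=\proj{K-tL>0}=\proj{K-tL\ge 0}$ is immediate. The Rellich/Hellmann--Feynman variant you sketch is likewise sound.

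As for comparison: the paper does not supply its own proof of this lemma. It is stated with a citation to \cite[Lemma~2.2]{LHC25_layer_cake} and used as a black box in the main argument, so there is no in-paper proof to compare your approach against. Your write-up therefore fills in exactly what the paper outsources, and either of your two routes would be a perfectly acceptable self-contained justification.
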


\begin{lemm} \label{lemm:Lipschitz}
    The function $Y \mapsto \Tr[Y_+]$ is $1$-Lipschitz continuous on Hermitian operators with respect to the trace norm.
\end{lemm}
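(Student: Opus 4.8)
The plan is to reduce the claim to two elementary properties of the trace norm. The starting point is the identity
\begin{align}
\Tr[Y_+] = \tfrac12\big(\Tr[\sqrt{Y^2}] + \Tr[Y]\big) = \tfrac12\big(\|Y\|_1 + \Tr[Y]\big),
\end{align}
immediate from $(Y)_+ = \tfrac12(\sqrt{Y^2}+Y)$ and $\Tr[\sqrt{Y^2}]=\|Y\|_1$ for Hermitian $Y$. Then for any two Hermitian operators $X,Y$ I would write
\begin{align}
\Tr[X_+] - \Tr[Y_+] = \tfrac12\big(\|X\|_1 - \|Y\|_1\big) + \tfrac12\,\Tr[X-Y],
\end{align}
and bound the two summands separately: the first by the reverse triangle inequality $\big|\,\|X\|_1 - \|Y\|_1\,\big| \le \|X-Y\|_1$, and the second by $|\Tr[X-Y]| \le \|X-Y\|_1$. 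This gives $\Tr[X_+] - \Tr[Y_+] \le \|X-Y\|_1$, and swapping the roles of $X$ and $Y$ yields the reverse inequality, hence $\big|\Tr[X_+]-\Tr[Y_+]\big| \le \|X-Y\|_1$, which is the claimed $1$-Lipschitz bound.

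An alternative route, perhaps closer in spirit to the projections used throughout this note, starts from the variational formula $\Tr[Y_+] = \max_{0\le P\le\I}\Tr[YP]$, with the maximum attained at the spectral projection $P=\proj{Y>0}$. This is checked by diagonalizing $Y=\sum_i\lambda_i\,|i\rangle\!\langle i|$, so that $\Tr[YP]=\sum_i\lambda_i\langle i|P|i\rangle$ with $\langle i|P|i\rangle\in[0,1]$, making $\sum_{i:\lambda_i>0}\lambda_i=\Tr[Y_+]$ the obvious maximum. If $P_X$ achieves this maximum for $X$, then
\begin{align}
\Tr[X_+] - \Tr[Y_+] \le \Tr[XP_X] - \Tr[YP_X] = \Tr[(X-Y)P_X] \le \|X-Y\|_1\,\|P_X\|_\infty \le \|X-Y\|_1,
\end{align}
using Hölder's inequality for Schatten norms and $\|P_X\|_\infty\le 1$, and symmetry again closes the argument.

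There is no genuine obstacle here: the lemma is elementary, and the only step deserving a line of justification is whichever identity one builds on — either the trace identity $\Tr[Y_+]=\tfrac12(\|Y\|_1+\Tr[Y])$ or the variational formula — both of which follow from the spectral decomposition in one step. I would present the first route, since it needs only the reverse triangle inequality and $|\Tr[\cdot]|\le\|\cdot\|_1$ and avoids any appeal to duality.
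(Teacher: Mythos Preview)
Both of your routes are correct. The paper actually takes your \emph{alternative} route via the variational formula $\Tr[Y_+]=\max_{0\le\Lambda\le\I}\Tr[\Lambda Y]$: it bounds $|\max-\max|$ by $\max_{0\le\Lambda\le\I}|\Tr[\Lambda(X-Y)]|$, identifies this maximum as $\max\{\Tr[(X-Y)_+],\Tr[(X-Y)_-]\}$, and observes that either term is at most $\|X-Y\|_1$. Your preferred first route, based on the identity $\Tr[Y_+]=\tfrac12(\|Y\|_1+\Tr[Y])$ together with the reverse triangle inequality and $|\Tr[\cdot]|\le\|\cdot\|_1$, is genuinely different and arguably more elementary: it never invokes duality or the optimizer, and it makes the constant $1$ appear as $\tfrac12+\tfrac12$ rather than through $\|P\|_\infty\le 1$. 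The variational approach, on the other hand, is closer to the projection language used elsewhere in the note and generalizes more readily (e.g., to other spectral truncations expressible as optimizations over effects). Either proof is perfectly adequate here.
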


\begin{proof}
Let $X$ and $Y$ be Hermitian operators.
Via the variational formula, we have
\begin{align}
\begin{split}
\left| \Tr[X_+] - \Tr[Y_+] \right|
&= \left| \max_{0\leq \Lambda \leq \mathds{1}} \Tr[\Lambda X] - \max_{0\leq \Lambda \leq \mathds{1}} \Tr[\Lambda X] \right|
\\
&\leq \max_{0\leq \Lambda \leq \mathds{1}} \left| \Tr\left[ \Lambda (X-Y) \right] \right|
\\
&= \max\left\{ \max_{0\leq \Lambda \leq \mathds{1}} \Tr\left[ \Lambda (X-Y) \right],  
\max_{0\leq \Lambda \leq \mathds{1}} \Tr\left[ \Lambda (Y-X) \right]
\right\}
\\
&= \max\left\{ \Tr\left[(X-Y)_+\right],  \Tr\left[(X-Y)_-\right] \right\}
\\
&\leq \left\| X - Y \right\|_1. \qedhere
\end{split}
\end{align}
\end{proof}



\section*{Acknowledgments}
LL is grateful to Mario Berta and Bartosz Regula for an early discussion that took place at the ``Spiaggia del Rogiolo'' in June 2025.
The discussion of this note between HC, LL, and PL took place during the workshop ``Mathematics in Quantum Information'' at RWTH Aachen University.
We sincerely thank Mario Berta for the kind hospitality. 
HC is supported by grant No.~NSTC 114-2628-E-002 -006, NSTC 114-2119-M-001-002, NSTC 114-2124-M-002-003, NTU-114V2016-1, NTU-114L895005, and NTU-114L900702. GG acknowledges financial support from the Israel Science Foundation under Grant No. 1192/24. LL acknowledges financial support from the European Union under the ERC StG ETQO (grant agreement
no.~101165230).



{\larger
\bibliographystyle{myIEEEtran}
\bibliography{reference.bib, operator.bib, operator2.bib}
}

\end{document}